\newtheorem {teo} {Theorem}
\newtheorem {Rmk} {Remark}
\newtheorem {Lemma} {Lemma}
\DeclareMathOperator{\per}{per}
\DeclareMathOperator{\parent}{parent}
\def\A{{\mathcal A}}
\def\B{{\mathcal B}}
\def\R{{\mathcal R}}
\def\Cyl{{\sf Cyl}}
\newcommand{\II}{
\mbox{
\unitlength=1pt
\begin{picture}(2,12)(-1,5)
\gasset{Nw=1.5,Nh=1.5,Nfill=y,AHnb=0}
\node(o1)(0,16){}
\node(p1)(0,10){}
\node(q1)(0,4){}
\drawedge(o1,p1){}
\drawedge(p1,q1){}
\end{picture} }}
\newcommand{\XX}{
\mbox{
\unitlength=1pt
\begin{picture}(16,12)(4.5,5)
\gasset{Nw=1.5,Nh=1.5,Nfill=y,AHnb=0}
\node(o1)(5,16){}
\node(o4)(20,16){}
\node(p1)(5,10){}
\node(p2)(10,10){}
\node(p3)(15,10){}
\node(p4)(20,10){}
\node(q2)(10,4){}
\node(q3)(15,4){}
\drawedge(o1,p2){}
\drawedge(p1,q2){}
\drawedge(p4,q3){}
\drawedge(o4,p3){}
\end{picture} }}
\newcommand{\V}{
\mbox{
\unitlength=1pt
\begin{picture}(2,10)(0,10)
\gasset{Nw=1.5,Nh=1.5,Nfill=y,AHnb=0}
\node(o1)(0,16){}
\node(p1)(0,10){}
\drawedge(o1,p1){}
\end{picture} }}
\newcommand{\Pp}{
\mbox{
\unitlength=1pt
\begin{picture}(5,10)(0,10)
\gasset{Nw=1.5,Nh=1.5,Nfill=y,AHnb=0}
\node(o1)(0,16){}
\node(p2)(5,10){}
\drawedge(o1,p2){}
\end{picture} }}
\newcommand{\Ll}{
\mbox{
\unitlength=1pt
\begin{picture}(5,10)(0,10)
\gasset{Nw=1.5,Nh=1.5,Nfill=y,AHnb=0}
\node(o2)(5,16){}
\node(p1)(0,10){}
\drawedge(o2,p1){}
\end{picture} }}
\newcommand{\X}{
\mbox{
\unitlength=1pt
\begin{picture}(5,10)(0,10)
\gasset{Nw=1.5,Nh=1.5,Nfill=y,AHnb=0}
\node(o2)(5,16){}
\node(p1)(0,10){}
\node(o1)(0,16){}
\node(p2)(5,10){}
\drawedge(o2,p1){}
\drawedge(o1,p2){}
\end{picture} }}
\newcommand{\VP}{
\mbox{
\unitlength=0.8mm
\begin{picture}(6,6)
\gasset{Nw=1.5,Nh=1.5,Nfill=y,AHnb=0}
\node[linecolor=White, fillcolor=White](o1)(0,0){}
\node[linecolor=White, fillcolor=White](p1)(0,6){}
\drawedge[linewidth=0.5](o1,p1){}
\end{picture} }}
\newcommand{\N}{
\mbox{
\unitlength=0.8mm
\begin{picture}(6,6)
\gasset{Nw=1.5,Nh=1.5,Nfill=y,AHnb=0}
\node[linecolor=White, fillcolor=White](o1)(-2.9,0){}
\node[linecolor=White, fillcolor=White](p1)(2.9,6){}
\drawedge[linewidth=0.5](o1,p1){}
\end{picture} }}
\newcommand{\K}{
\mbox{
\unitlength=0.8mm
\begin{picture}(6,6)
\gasset{Nw=1.5,Nh=1.5,Nfill=y,AHnb=0}
\node[linecolor=White, fillcolor=White](o1)(2.9,0){}
\node[linecolor=White, fillcolor=White](p1)(-2.9,6){}
\drawedge[linewidth=0.5](o1,p1){}
\end{picture} }}
\title{On Pansiot Words Avoiding 3-Repetitions}
\author{Irina A. Gorbunova
\institute{Ural Federal University\\
Ekaterinburg, Russia}
\email{i.a.gorbunova@gmail.com}
\and Arseny M. Shur
\institute{Ural Federal University\\
Ekaterinburg, Russia}
\email{arseny.shur@usu.ru}
}
\begin{document}
\maketitle
\hyphenation{Using}

\begin{abstract}
The recently confirmed Dejean's conjecture about the threshold between avoidable and un\-avoidable powers of words gave rise to interesting and challenging problems on the structure and growth of threshold words. Over any finite alphabet with $k\ge 5$ letters, Pansiot words avoiding 3-repetitions form a regular language, which is a rather small superset of the set of all threshold words. Using cylindric and 2-dimensional words, we prove that, as $k$ approaches infinity, the growth rates of complexity for these regular languages tend to the growth rate of complexity of some ternary 2-dimensional language. The numerical estimate of this growth rate is ${\approx}1.2421$.
\end{abstract}

Powers, integral and fractional, are the simplest and most natural repetitions in words. Any repetition over an arbitrary fixed alphabet is characterized by the set of all words over this alphabet, avoiding this repetition. The main question concerning such a set is whether it is finite or infinite. For fractional powers, this question is answered by Dejean's conjecture \cite{Dej}, which is now proved in all cases by the efforts of different authors, see \cite{Pan,Mou,MNC,Car,CR1,CR2,Rao}.

Recall that the \textit{exponent} of a word $w$ is the ratio between its length and its minimal period: $\exp(w)=~\hspace{-1.5mm}|w|/\per(w)$. If $\exp(w)=\beta>1$, then $w$ is a \textit{fractional power} ($\beta$-\textit{power}). It is convenient to treat the notion of $\beta$-power as follows: a word $w$ is a $\beta$-power if $\exp(w)\ge\beta$ while $(|w|{-}1)/\per(w)<\beta$, and a $\beta^+\!$-power if $\exp(w)>\beta$ while $(|w|{-}1)/\per(w)\le\beta$. As usual, $\beta^+$ is treated as a ``number'', covering $\beta$ in the usual $\le$ order. A word is called \textit{$\beta$-free} (where $\beta$ can be a number with plus as well) if it contains no $\beta$-powers as factors. A $\beta$-power is \textit{$k$-avoidable} if the number of $k$-ary $\beta$-free words is infinite. Dejean's conjecture states that a $\beta$-power is $k$-avoidable if and only if
$$
\beta\ge(7/4)^+\text{ and } k=3,\ \ \beta\ge(7/5)^+\text{ and } k=4,\ \text{ or }\beta\ge(k/(k{-}1))^+\text{ and } k=2,k\ge5.
$$
The $(k/(k{-}1))^+\!$-free languages over $k$-letter alphabets, where $k\ge5$, are called \textit{threshold languages}; we denote them by $T_k$. We study structure and growth of these languages, aiming at the asymptotic properties as the size of the alphabet increases.

Any threshold language can be approximated from above by a series of regular languages consisting of words that \textit{locally} satisfy the $(k/(k{-}1))^+\!$-freeness property. Namely, these words avoid all $(k/(k{-}1))^+\!$-powers $w$ such that $|w|-\per(w)\le m$, for some constant $m$. From our previous work \cite{ShGo}, it is clear that the case $m=3$ gives a lot of important structural information about the languages $T_k$. Here we study this case in details, using \textit{cylindric representation} that captures the properties common for considered words over all alphabets.

\section {Preliminaries}

We study finite words and two-sided infinite words (\textit{Z-words}) over finite $k$-letter alphabets $\Sigma_k$ and over some special ternary alphabet introduced below. We also consider 2-dimensional words, which are just finite rectangular arrays of alphabetic symbols. Unlike to some commonly used models of 2-dimensional words (cf.~\cite{2D}), we do not use additional symbols to mark the borders of such a word. Factors of 2-dimensional words are also 2-dimensional words.

A (1- or 2-dimensional) language is \textit{factorial}, if it is closed under taking factors of its words. A word $w$ \textit{avoids} a word $u$ if $u$ is not a factor of $w$. The set of all minimal (with respect to the factor order) words avoided by all elements of a factorial language $L$ is called the {\it antidictionary} of $L$. All 1-dimensional languages with finite antidictionaries are regular.

We denote the antidictionary of the threshold language $T_k$ by $A_k$. A word $u\in A_k$ can be factorized as $u=yzy$, where $|yz|=\per(u)$, $|u|/|yz|>k/(k{-}1)$, and all proper factors of $u$ have the exponent at most $k/(k{-}1)$. If $|y|=m$, we call $u$ an \textit{$m$-repetition}.

The finite set $A_k^{(m)}\subset A_k$ consists of all $r$-repetitions with $r\le m$. The notation $T_k^{(m)}$ is used for the (regular) language with the antidictionary $A_k^{(m)}$. Then, $T_k\subseteq T_k^{(m)}$. Since an infinite regular language contains arbitrary powers of some word, one has $T_k\subset T_k^{(m)}$. Clearly, $T_k=\bigcap_{m=1}^{\infty}T_k^{(m)}$.

The \textit{combinatorial complexity} of a language $L$ is a function $C_L(n)$ which returns the number of words in $L$ of length $n$. This function serves as a natural quantitative measure of $L$. ``Big'' [``small''] languages have exponential [resp., subexponential] complexity. Exponential complexity can be described by means of the \textit{growth rate} $\alpha(L)=\limsup_{n\to\infty}(C_L(n))^{1/n}$ (subexponential complexity is indicated by $\alpha(L)=1$). For factorial languages, classical Fekete's lemma implies 
$$
\alpha(L)=\lim_{n\to\infty}(C_L(n))^{1/n}=\inf_{n\to\infty}(C_L(n))^{1/n}.
$$
The growth rate of $T_k^{(m)}$ approximates the growth rate of $T_k$ from above. It is easy to prove that $\lim_{m\to\infty}\alpha(T_k^{(m)})=\alpha(T_k)$.

For regular languages, the growth rate equals the \textit{index} (spectral radius of the adjacency matrix) of recognizing automaton, providing that this automaton is \textit{consistent} (each vertex belongs to some accepting walk), and either deterministic, or non-deterministic but \textit{unambiguous} (there is at most one walk with the given label between two given vertices); see \cite{Sh1}.

\smallskip
In \cite{Pan}, Pansiot showed how to encode all words from the language $T_k^{(2)}$ with ``characteristic'' words over the alphabet $\{0,1\}$. This encoding played a big role in the proof of Dejean's conjecture; so, we refer to the elements of $T_k^{(2)}$ as to \textit{Pansiot words}. These words can be equivalently defined by the following pair of conditions:
\begin{itemize}
\item[(P1)] two closest occurrences of a letter are on the distance $k{-}1$, $k$, or $k{+}1$;
\item[(P2)] two closest occurrences of a letter are followed by different letters.
\end{itemize}
We also consider \textit{Pansiot Z-words}, which are given by (P1), (P2) as well. Finite factors of Pansiot Z-words are exactly Pansiot words.

Now we introduce \textit{cylindric representation} of Pansiot words. Imagine such a word (finite or infinite) as a rope with knots, which are representing letters. This rope is wound around a cylinder such that the knots at distance $k$ are placed one under another (Fig.~\ref{fig0}, a). By (P1), the knots labeled by two closest occurrences of the same letter appear on two consecutive winds of the rope one under another or shifted by one knot (Fig.~\ref{fig0}, b). If we connect these closest occurrences by ``sticks'', we get three types of such sticks: vertical, left-slanted, and right-slanted (Fig.~\ref{fig0},~b). We associate each letter in a Pansiot word with a stick going up from the corresponding knot, getting an encoding of this word by a \textit{cylindric word} over the ternary alphabet $\Delta=\{\V,\Ll,\Pp\}$. Since the sticks allow one to establish equality of letters in a Pansiot word, such a cylindric word [Z-word] uniquely represents the original word [resp., Z-word] up to the permutation of the alphabet. Note that cylindric words avoid squares of letters in view of (P2). Hence, cylindric Z-words are just infinite sequences of blocks $\V\X$ and $\X$.

\begin{figure}[htb]
\unitlength=0.9mm
\centerline{ 
\begin{picture}(64,60)(0,-3) 
\gasset{Nw=1.3,Nh=1.3,Nfill=y,AHnb=0,ExtNL=y,NLdist=1}
\node[Nw=0,Nh=0](f0)(15,50.4){}
\node[Nw=0,Nh=0](f1)(5,45){}
\node[Nw=0,Nh=0](f2)(55,35){}
\node[Nw=0,Nh=0](f3)(5,29){}
\node[Nw=0,Nh=0](f4)(55,19){}
\node[Nw=0,Nh=0](f5)(5,13){}
\node[Nw=0,Nh=0](f6)(55,3){}
\node[Nw=0,Nh=0](f7)(48,5){}
\put(5,53){\line(0,-1){53}}
\put(55,53){\line(0,-1){53}}
\node(o1)(10,41.8){}
\node(o2)(19.6,37){}
\node(o3)(29.9,34){}
\node(o4)(41,33.1){}
\node(o5)(52,34.3){}
\node[Nfill=n](o6)(46.7,37.2){}
\node[Nfill=n](o7)(35,38.3){}
\node[Nfill=n](o8)(24.1,37){}
\node[Nfill=n](o9)(12.5,32.9){}
\node(p1)(10,25.8){}
\node(p2)(19.6,21){}
\node(p3)(29.9,18){}
\node(p4)(41,17.1){}
\node(p5)(52,18.3){}
\node[Nfill=n](p6)(46.7,21.2){}
\node[Nfill=n](p7)(35,22.3){}
\node[Nfill=n](p8)(24.1,21){}
\node[Nfill=n](p9)(12.5,16.9){}
\node(q1)(10,9.8){}
\node(q2)(19.6,5){}
\node(q3)(29.9,2){}
\node(q4)(41,1.1){}
\node(q5)(52,2.3){}
\drawedge[curvedepth=0.3,dash={2.5 2}{0}](f1,f0){}
\drawedge[curvedepth=-6](f1,f2){}
\drawedge[curvedepth=-6,dash={2.5 2}{0}](f2,f3){}
\drawedge[curvedepth=-6](f3,f4){}
\drawedge[curvedepth=-6,dash={2.5 2}{0}](f4,f5){}
\drawedge[curvedepth=-6](f5,f6){}
\drawedge[curvedepth=-0.3,dash={2.5 2}{0}](f6,f7){}
\put(30,52){\makebox(0,0)[cb]{$\ldots$}}
\put(30,6){\makebox(0,0)[cb]{$\ldots$}}
\end{picture} 
\begin{picture}(64,57)(-4,-3)
\gasset{Nw=1.3,Nh=1.3,Nfill=y,AHnb=0,ExtNL=y,NLdist=1}
\node[Nw=0,Nh=0](f0)(15,50.4){}
\node[Nw=0,Nh=0](f1)(5,45){}
\node[Nw=0,Nh=0](f2)(55,35){}
\node[Nw=0,Nh=0](f3)(5,29){}
\node[Nw=0,Nh=0](f4)(55,19){}
\node[Nw=0,Nh=0](f5)(5,13){}
\node[Nw=0,Nh=0](f6)(55,3){}
\node[Nw=0,Nh=0](f7)(48,5){}
\put(5,53){\line(0,-1){53}}
\put(55,53){\line(0,-1){53}}
\node(o1)(10,41.8){$_a$}
\node(o2)(19.6,37){$_b$}
\node(o3)(29.9,34){$_c$}
\node(o4)(41,33.1){$_d$}
\node(o5)(52,34.3){$_e$}
\node(p1)(10,25.8){$_b$}
\node[NLangle=80](p2)(19.6,21){$_a$}
\node[NLangle=270](p3)(29.9,18){$_c$}
\node[NLangle=270](p4)(41,17.1){$_e$}
\node[NLangle=270,NLdist=0.6](p5)(52,18.3){$_d$}
\node[NLangle=270](q1)(10,9.8){$_b$}
\node[NLangle=270](q2)(19.6,5){$_c$}
\node[NLangle=270](q3)(29.9,2){$_a$}
\node[NLangle=270,NLdist=0.4](q4)(41,1.1){$_d$}
\node[NLangle=270](q5)(52,2.3){$_e$}
\drawedge[curvedepth=0.3,dash={2.5 2}{0}](f1,f0){}
\drawedge[curvedepth=-6](f1,f2){}
\drawedge[curvedepth=-6,dash={2.5 2}{0}](f2,f3){}
\drawedge[curvedepth=-6](f3,f4){}
\drawedge[curvedepth=-6,dash={2.5 2}{0}](f4,f5){}
\drawedge[curvedepth=-6](f5,f6){}
\drawedge[curvedepth=-0.3,dash={2.5 2}{0}](f6,f7){}
\drawedge[linewidth=0.25](o1,p2){}
\drawedge[linewidth=0.25](o2,p1){}
\drawedge[linewidth=0.25](o3,p3){}
\drawedge[linewidth=0.25](o1,p2){}
\drawedge[linewidth=0.25](o4,p5){}
\drawedge[linewidth=0.25](o5,p4){}
\drawedge[linewidth=0.25](p1,q1){}
\drawedge[linewidth=0.25](p2,q3){}
\drawedge[linewidth=0.25](p3,q2){}
\drawedge[linewidth=0.25](p4,q5){}
\drawedge[linewidth=0.25](p5,q4){}
\put(30,52){\makebox(0,0)[cb]{$\ldots$}}
\put(32,6){\makebox(0,0)[cb]{$\ldots$}}
\end{picture} 
}
\vspace*{2mm}
\scriptsize{\hspace*{2.75cm}a) Infinite word on a cylinder ($k=9$)\hspace*{3.0cm}b) Sticks (only visible)}
\caption{\small\sl Cylindric representation of Pansiot words.} \label{fig0}
\end{figure}
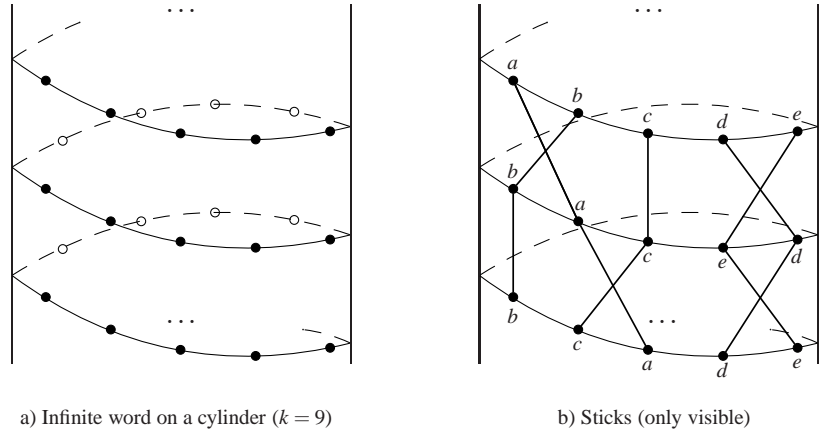
The feature of cylindric words is that they have an additional 2-dimensional structure, allowing one to capture structural properties of Pansiot words through 2-dimensional factors of cylindric words. We say that a Z-word $W$ is \textit{compatible} to a language $L$ if all factors of $W$ belong to $L$.

\begin{teo}[\cite{ShGo}]
For any integer $m\ge3$, there exists a set $S_m$ of 2-dimensional words of size $O(m)\times~\hspace{-1.5mm}O(m)$ over $\Delta$ such that for any $k\ge2m{-}3$, a Pansiot Z-word $W$ over $\Sigma_k$ is compatible to $T_k^{(m)}$ if and only if the corresponding cylindric Z-word has no 2-dimensional factors from $S_m$.
\end{teo}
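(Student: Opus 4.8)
Since $T_k^{(2)}$ is exactly the set of Pansiot words, a Pansiot Z-word $W$ over $\Sigma_k$ is compatible to $T_k^{(m)}$ if and only if it avoids $A_k^{(m)}$, and since all factors of $W$ are Pansiot words it already avoids $A_k^{(2)}$; thus $W$ is compatible to $T_k^{(m)}$ iff no factor of $W$ is an $r$-repetition for any $r$ with $3\le r\le m$. So it is enough, for each such $r$, to describe how an $r$-repetition sits inside the cylindric representation and to collect all the resulting local patterns into a finite set $S_m$.

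The first step is a structural lemma on an $r$-repetition $u=yzy$ with $p:=\per(u)=|yz|$. From $|u|/p>k/(k{-}1)$ and, for minimality, ``every proper factor has exponent $\le k/(k{-}1)$'' one gets $(r{-}1)(k{-}1)\le p<r(k{-}1)$, hence $(r{-}1)k<|u|<rk$, so $u$ meets at most $r{+}1\le m{+}1$ consecutive winds. Following inside $u$ the chain of consecutive occurrences of a fixed letter $u[\ell]$ ($0\le\ell\le r{-}1$): these occurrences run from position $\ell$ to position $\ell{+}p$ with consecutive gaps in $\{k{-}1,k,k{+}1\}$, and combining the resulting bounds with $(r{-}1)(k{-}1)\le p<r(k{-}1)$ and the hypothesis $k\ge 2m{-}3\ge 2r{-}3$ forces this chain to consist of exactly $r{-}1$ sticks (there is one borderline sub-case, $r=m$ and $k=2m{-}3$, to treat by hand). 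Consequently $p=(r{-}1)k+\delta$ with $|\delta|\le r{-}1$, and --- crucially --- $\delta$ depends only on the \emph{shapes} $\V,\Ll,\Pp$ of these $r{-}1$ sticks, not on $k$.

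Translating to the plane: putting $u$ on the cylinder, the length-$r$ prefix and the length-$r$ suffix of $u$ occupy two $r$-knot horizontal segments, the second displaced from the first by $(r{-}1,\delta)$, and they are joined, knot by knot, by the $r$ letter-chains of $y$, each spanning $r{-}1$ rows and drifting $O(r)$ columns. As a stick-chain between two knots of a cylindric word certifies that the two letters of $W$ are equal, this whole configuration is a rectangular block of the cylindric Z-word of height $r{+}1$ and width $O(r)$ (or, if $k$ is too small to hold that window, of width $k=O(m)$) --- in all cases an $O(m)\times O(m)$ block that is present precisely when the corresponding equalities hold in $W$. I would take $S_m$ to be the finite set of all such blocks, over all $r\in\{3,\dots,m\}$ and all admissible shape sequences of the chains. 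The ``only if'' direction of the theorem is then immediate; for ``if'', a block of $S_m$ occurring in the cylindric Z-word forces a factor $u'=W[\cdots]$ of length $(r{-}1)k+\delta+r$ whose length-$r$ prefix and suffix coincide, so $\per(u')\le(r{-}1)k+\delta<r(k{-}1)$, whence $u'$ --- hence a factor of it lying in $A_k^{(m)}$ --- shows $W$ is not compatible to $T_k^{(m)}$.

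The hard part is the quantitative bookkeeping: fixing the right constant $2m{-}3$ and pinning every relevant chain to exactly $r{-}1$ sticks (so that the blocks of $S_m$ have $k$-independent shapes), handling wrap-around on narrow cylinders, and --- most delicately --- checking that the single set $S_m$ works for \emph{every} $k\ge 2m{-}3$, not just for large $k$. The last point is subtle because the ``borderline'' blocks (those with $|\delta|$ maximal) correspond to a period $(r{-}1)k+\delta$ that fails $p<r(k{-}1)$ once $k$ is as small as $2m{-}3$; these have to be disposed of by showing either that such a block cannot appear in the cylindric representation of any Pansiot Z-word over so small an alphabet (the two near-identical winds it would force already create a $2$-repetition) or that, when it does appear, it forces a shorter repetition that genuinely lies in $A_k^{(m)}$. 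Enumerating $S_m$ explicitly and settling the $r=m$, $k=2m{-}3$ sub-case are then routine but lengthy.
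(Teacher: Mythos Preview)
The paper does not contain a proof of this theorem. The citation in the theorem header, \cite{ShGo}, signals that the result is quoted from the authors' earlier paper; the present paper only offers the one-paragraph illustration for $m=3$ immediately following the statement (the blocks $\II$ and $\XX$ force two occurrences of a length-$3$ factor $abc$ at distance $2k$, and $(2k{+}3)/2k>k/(k{-}1)$). There is thus no proof here to compare your proposal against.

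For what it is worth, your outline is consistent with that $m=3$ illustration and with the shape one would expect the argument in \cite{ShGo} to take: derive $(r{-}1)(k{-}1)\le p<r(k{-}1)$ from the definition of an $r$-repetition, follow the stick-chain of each of the $r$ letters of the overlap $y$ across $r{-}1$ winds so that the horizontal drift $\delta=p-(r{-}1)k$ is bounded by $r{-}1$ and depends only on the stick shapes, and take $S_m$ to be the resulting finite collection of $O(m)\times O(m)$ blocks. You have also correctly located the genuine subtleties---pinning each chain to exactly $r{-}1$ sticks is what forces the threshold $k\ge 2m{-}3$, and the extremal case $r=m$, $k=2m{-}3$ needs separate attention because there $(r{-}2)(k{+}1)=(r{-}1)(k{-}1)$, so a chain of $r{-}2$ maximal sticks is numerically possible and must be excluded by other means. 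Whether your proposed disposal of that sub-case (via a forced $2$-repetition on the near-coinciding winds) matches the original argument can only be checked against \cite{ShGo} itself.
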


This theorem states that cylindric words that encode the words from $T_k^{(m)}$ are defined by 2-dimensional avoidance properties. For example, cylindric words of the Pansiot words avoiding 3-repetitions are defined by the avoidance of the structures $\II$ and $\XX$. Indeed, any of these structures implies the existence of three successive letters (say, $a,b$, and $c$) in the encoded Pansiot word such that two occurrences of the factor $abc$ appear one under another at the distance $2k$; since $(2k{+}3)/2k>k/(k{-}1)$, the encoded word contains a 3-repetition. 

For a language $L$, let $\widehat{L}$ be its subset consisting of all factors of Z-words compatible to $L$. By \cite[Theorem~3.1]{Sh2}, $\alpha(\widehat{L})=\alpha(L)$. Let $\Cyl_k^{(m)}$ be the set of all factors of cylindric Z-words encoding Pansiot Z-words compatible to $T_k^{(m)}$. Then clearly $\alpha(\Cyl_k^{(m)})=\alpha(\widehat{T}_k^{(m)})=\alpha(T_k^{(m)})$. Thus, the growth rates of threshold languages can be estimated through the study of cylindric words with simple avoidance properties that are independent of the size of the alphabet. In what follows, we refer to the elements of $\Cyl_k^{(m)}$ as cylindric factors.

\smallskip
The above considerations imply two natural conjectures: \textit{for any fixed $m\ge3$, the sequence $\{\alpha(T_k^{(m)}\!)\}_5^{\infty}$ has a limit as $k$ approaches infinity, and this limit is the ``growth rate'' of the 2-dimensional language defined by the same avoidance properties as $\Cyl_k^{(m)}$}. Through the computations of growth rates for the alphabets with $5,6,\ldots,60$ letters we observed in \cite{ShGo} that the sequence $\{\alpha(T_k^{(3)})\}$ demonstrates fast convergence to the limit ${\approx}1.242096777$.

In this paper, we confirm both conjectures for the case $m=3$. The corresponding 2-dimen\-sional language will be denoted by $D$; it consists of all rectangular words over $\Delta$ having no factors $\genfrac{}{}{0pt}{}{\V}{\V}$ and $\genfrac{}{}{0pt}{}{\Pp\Ll}{\Pp\Ll}$.
In fact, the case $m=3$ is the crucial one to approximate the growth rates of threshold languages, because in \cite{ShGo} it was shown that
\begin{itemize}
\item[-] there is no 4- and 5-repetitions;
\item[-] $m$-repetitions with $m\ge6$ do not affect significantly the growth rate, as far as we can check this by extensive computer-assisted studies based on the results of \cite{Sh1}.
\end{itemize}

\section {Two-dimensional languages}
\textit{Combinatorial complexity} $C_L(n,k)$ of a 2-dimensional language $L$ is the function returning the number of $n\times k$ words in $L$. If $L$ is factorial, then its \textit{growth rate} is defined by the formula 
\begin{equation} \label{ind2d}
\alpha(L)=\lim_{n,k\to\infty}{(C_L(n,k))^{1/nk}}.
\end{equation}
The function $C_L(n,k)$ in this case is submultiplicative for each variable, and hence the existence of the limit (\ref{ind2d}) follows from the multivariate version of Fekete's lemma \cite{Cap}.

On the other hand, it is completely unclear how to calculate the growth rates of 2-dimensional languages. For the 1-dimensional case, the growth rate of a regular language can be found quite efficiently, see \cite{Sh1}. Here we give one idea how to estimate the growth rate of a 2-dimen\-sional language. Since the limit (\ref{ind2d}) exists, we can take any ``diagonal'' subsequence of $C_L(n,k)$; we choose $\big\{\!\!{(C_L(n,n))^{1/n^2}}\big\}_{1}^{\infty}$. Applying Stolz's Theorem (see~\cite{Fih}) twice, we get
\begin{equation*}
\alpha(L)=\lim\limits_{n\to\infty}{(C_L(n,n))^{1/n^2}}=\lim\limits_{n\to\infty}{\left(\frac{C_L(n,n)}{C_L(n{-}1,n{-}1)}\right)^{1/(2n{-}1)}}=\lim\limits_{n\to\infty}\frac{\big({C_L(n,n)C_L(n{-}2,n{-}2)}\big)^{1/2}}{C_L(n{-}1,n{-}1)}
\end{equation*}
if the last two limits exist. Calculating the values of these sequences for the language $D$ (see Table~\ref{tab1}), we see that the last sequence has the best behaviour and allows one to suggest $\alpha(D)\approx 1.2421$. Thus, we get an additional support to the conjecture that $\alpha(D)$ is the limit of the sequence $\big\{\alpha(\Cyl_k^{(3)})\big\}_5^{\infty}$. For the rest of the paper, we set $C(n,k)=C_D(n,k)$.
\begin{table}[htb]
\caption{\small\sl Approximation to the growth rate of the 2-dimensional language $D$.} \label{tab1}
\vspace*{2mm}
\centerline{
\begin{tabular}{|c|c|c|c|}
\hline
$n$&$(C_D(n,n))^{1/n^2}$&$\Big(\frac{C_D(n,n)}{C_D(n{-}1,n{-}1)}\Big)^{1/(2n{-}1)}$&$\frac{\big({C_D(n,n)C_D(n{-}2,n{-}2)}\big)^{1/2}}{C_D(n{-}1,n{-}1)}$\\
\hline
3&1.627251&1.438233&1.191687\\
4&1.525034&1.402991&1.318617\\
5&1.464419&1.362547&1.229958\\
\ldots&\ldots&\ldots&\ldots\\
27&1.280207&1.261332&1.242089\\
28&1.278823&1.260626&1.242080\\
29&1.277537&1.259972&1.242104\\
30&1.276337&1.259362&1.242102\\
\hline
\end{tabular}}
\end{table}

\section {Automata}

Let us fix an arbitrary $k\ge5$. We denote the set of all words of width $k$ from $D$ by $D_k$. It is natural to put $\alpha(D_k)=\lim_{n\to\infty}\!{(C(n,k))^{1/nk}}$; then  $\lim_{k\to\infty}\alpha(D_k)=\alpha(D)$ as the iterative limit of the existing double limit. Note that $D_k$ can be also viewed as a 1-dimensional regular language over the alphabet $\Delta^k$. The automaton $\A$ recognizing $D_k$ can be defined as follows:
\begin{enumerate}
\item[(A1)] the words of length $k$ from $\Cyl_k^{(3)}$ (they coincide with the words of size $1\times k$ from $D_k$) are the vertices; 
\item[(A2)] an edge $u\to v$ exists if and only if the word $\genfrac{}{}{0pt}{}{u}{v}$ of size $2\times k$ belongs to $D_k$; such an edge is labeled by $v$; 
\item[(A3)] each vertex is both initial and terminal.
\end{enumerate}

Note that $\A$ is an unambiguous nondeterministic automaton recognizing $D_k$ as a language over $\Delta^k$. The index of $\A$ (and the growth rate of $D_k$ over $\Delta^k$) equals $\alpha(D_k)^k$. The underlying graph of $\A$ is undirected due to vertical symmetry of the avoided factors. Let $P_u(n)$ be the number of walks of length $n$ in $\A$, starting at the vertex $u$, $P(n)=\sum P_u(n)$ be the number of all walks of length $n$ in $\A$. Then $P(n)=C(n{+}1,k)$.

\smallskip
For the language $\Cyl_k^{(3)}$, we build the \textit{Rauzy graph} $\R$ of order $k{+}1$. The vertices of this graph are the words of $\Cyl_k^{(3)}$ of length $k{+}1$, and a directed edge connects a vertex $u$ to $v$ if and only if some word of $\Cyl_k^{(3)}$ of length $k{+}2$ has the prefix $u$ and the suffix $v$. It is easy to see that the edges of $\R$ can be labeled such that $\R$ becames a deterministic \textit{cover automaton} (all transitions are deterministic, all vertices are both initial and terminal), recognizing the language $\Cyl_k^{(3)}$. Deterministic cover automaton is a special case of unambiguous nondeterministic automaton; so, the index of $\R$ equals $\alpha(\Cyl_k^{(3)})$. Now consider the $k$th power $\R^k$ of $\R$. Note that in most cases the correctness of transition from some vertex $u$ of $\R^k$ to some other vertex $v$ can be checked using only $k$ last symbols of $u$. The only exception is the case when the $k$-letter suffix of $u$ begins and ends with $\X$: if $u$ begins with $\V$, then the $k$-letter suffix of $v$ can begin with both $\V$ and $\Ll$, while if $u$ begins with $\Pp$, then this suffix of $v$ must begin with $\V$ to prevent the appearance of the avoided $2\times2$ factor. Let us require $v$ to begin with $\V$ in any case and consider the automaton $\B$ such that 
\begin{enumerate}
\item[(B1)] the words of length $k$ from $\Cyl_k^{(3)}$ (the suffixes of length $k$ of the vertices from $\R^k$) are the vertices; 
\item[(B2)] an edge $u\to v$ exists if and only if (a) the automaton $\R^k$ contains the edge $au\rightarrow bv$ for some $a,b\in \Delta$, and (b) if $u$ 
has the form $\X\cdots \X$, then $v$ begins 
with $\V$; such an edge is labeled by $v$;
\item[(B3)] each vertex is both initial and terminal.
\end{enumerate}

We will write $P'_u(n)$ for the number of walks of length $n$ in $\B$, starting at $u$, and $P'(n)=\sum P'_u(n)$ for the number of all walks of length $n$ in $\B$. If we denote the number of words of length $nk$ in the language $\Cyl_k^{(3)}$ by $C'(n,k)$, then it is easy to see that $P'(n)\leq C'(n{+}1,k)\leq C(n{+}1,k)$.

\section {Main result}

Since the indices of automata depend only on their adjacency matrices, below we consider the automata $\A$ and $\B$ just as digraphs. Recall that they share the same set of vertices and any edge of $\B$ is contained in $\A$. The outdegrees of a vertex $u$ in $\A$ and ${\B}$ are denoted respectively by $\deg_{\A}^+(u)$ and $\deg_{\B}^+(u)$. We say that the vertices $u$ and $v$ are \textit{similar} if they coincide up to the first 11 letters. Similarity is an equivalence relation; we write $u\sim v$. 

\begin{Rmk}\label{NN}
The classes of $\sim$ are finite, since the cardinality of such a class is the number of words of length $11$ over $\Delta$ that can be extended by the same suffix. The maximum cardinality of such a class is $N=28$ independently of $k$, and is achieved on any suffix that begins with $\Ll$. 
\end{Rmk}

The following two key lemmas hold for any $k\ge12$ (this restriction is necessary only for the existence of 12th symbol in the label of the vertex). 

\begin{Lemma}\label{12th} 
For any vertex $u=u_1\ldots u_k$ and any $a\in\Delta$ such that either $a\ne\V$ or $u_{12}\ne\V$, there exists an edge $u\to x$ in $\B$ such that the 12th letter of $x$ is $a$.
\end{Lemma}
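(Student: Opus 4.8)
The goal is to show that from any vertex $u=u_1\ldots u_k$ of $\B$ we can reach, in one step, a vertex $x$ whose 12th letter equals any prescribed $a\in\Delta$, with the single forbidden case $a=\V$ combined with $u_{12}=\V$ (which is excluded precisely because that case would create the avoided factor $\genfrac{}{}{0pt}{}{\V}{\V}$ at columns $11$--$12$). The strategy is constructive: exhibit, for the given $u$ and $a$, an explicit $2\times k$ array $\genfrac{}{}{0pt}{}{u'}{x}$ lying in $D$, where $u'$ is a suitable "lift" of $u$ (i.e. $u$ with a symbol $a$ prepended as needed so that $au\to bx$ is an edge of $\R^k$), and then verify that $x$ satisfies condition (B2)(b).

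First I would reduce the problem to a purely local, column-by-column construction. By the definition of $\B$ (conditions (B1)--(B3)) and of $\R^k$, an edge $u\to x$ exists essentially when the $2\times k$ block $\genfrac{}{}{0pt}{}{u}{x}$ avoids the two forbidden $2\times 1$ and $2\times 2$ patterns $\genfrac{}{}{0pt}{}{\V}{\V}$ and $\genfrac{}{}{0pt}{}{\Pp\Ll}{\Pp\Ll}$, together with the mild boundary adjustment forcing $x$ to start with $\V$ when $u$ has the form $\X\cdots\X$; here one must also remember that $x$ itself must be a legal length-$k$ word of $\Cyl_k^{(3)}$, i.e. a square-free word over $\Delta$ realizable on the cylinder, which (as noted in the Preliminaries) means $x$ is a concatenation of blocks $\V\X$ and $\X$. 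So the task becomes: given $u$, build a square-free cylindric word $x$ of length $k$ with $x_{12}=a$ and with $x_i\ne u_i$ forbidden only in the single pattern $\genfrac{}{}{0pt}{}{\V}{\V}$, and with no two consecutive columns both equal to $\genfrac{}{}{0pt}{}{\Pp}{\Pp}$ forming $\genfrac{}{}{0pt}{}{\Pp\Ll}{\Pp\Ll}$ (and the analogous issue does not arise since $u$ is fixed). The key freedom is that at almost every column we may pick $x_i$ from two of the three letters of $\Delta$, so there is enough slack to (i) hit the prescribed value $a$ at column $12$, (ii) avoid $\genfrac{}{}{0pt}{}{\V}{\V}$ (possible at column $i$ unless $u_i=\V$ and we are forced to put $\V$, which only happens if the neighbouring columns pin us down), and (iii) keep $x$ square-free and block-decomposable. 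I would handle this by walking from column $12$ outward in both directions, at each column choosing $x_i$ greedily subject to the square-freeness of $x$ and the avoidance constraints with $u$, and checking that a valid choice always remains; the case $a=\V$ needs $u_{12}\ne\V$, which is exactly the hypothesis, and in that case $x_{12}=\V$ forces $x_{11},x_{13}\ne\V$, but those columns still have room because $u_{11},u_{13}$ impose at most the single constraint each.

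The main obstacle — and the part requiring genuine case analysis rather than a one-line argument — is the interaction between the three simultaneous constraints at the columns adjacent to column $12$, especially when $a=\V$. Putting $\V$ in $x$ at column $12$ means $x$ cannot have $\V$ at columns $11$ or $13$ (square-freeness: no $\V\V$), so at those two columns $x$ must use $\Ll$ or $\Pp$; if in addition $u_{11}=\Ll$ and $u_{12}=\Pp$ (say), one has to be careful not to create the $\genfrac{}{}{0pt}{}{\Pp\Ll}{\Pp\Ll}$ pattern, and similarly the block structure of $x$ (every $\Pp$ must be immediately preceded by a $\V$ or be part of an $\X$) constrains which of $\Ll,\Pp$ is admissible. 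I expect to dispatch this by splitting on the value of $u_{12}$ (which is $\ne\V$ in the hard case) and on whether $u_{11}$ and $u_{13}$ equal $\V$ or not — a bounded number of subcases — in each of which one writes down an explicit admissible continuation of $x$ to the left of column $11$ and to the right of column $13$, using that for $k\ge 12$ there is at least one full column of slack on each side. Once the neighbourhood of column $12$ is fixed consistently, extending $x$ to a full legal cylindric word of length $k$ is routine: away from the pinned columns, at every step at least one of the two non-repeating letters is compatible with the $\genfrac{}{}{0pt}{}{\V}{\V}$- and $\genfrac{}{}{0pt}{}{\Pp\Ll}{\Pp\Ll}$-avoidance against $u$, and the block decomposition is maintained automatically.
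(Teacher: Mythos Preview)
Your outward-from-column-12 construction has a real gap at the left boundary. Recall that an edge $u\to x$ in $\B$ requires not only that the $2\times k$ block $\genfrac{}{}{0pt}{}{u}{x}$ avoid the two forbidden patterns, but also that the concatenation $ux$ be a legal length-$2k$ cylindric factor (so $u_kx_1$ must be one of the four admissible 1D transitions) and, when $u$ has the form $\X\cdots\X$, that $x_1=\V$ (condition (B2)(b)). Your greedy leftward extension from a pinned $x_{12}$ guarantees at each step that \emph{some} admissible letter exists, but gives you no control over which letter you land on at column~1. If that letter happens to violate the 1D constraint with $u_k$, or violates (B2)(b), the constructed $x$ is not a $\B$-successor of $u$. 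Calling this step ``routine'' hides exactly the difficulty the lemma is about: that the constraints imposed at the boundary can be reconciled with an arbitrary prescription twelve columns away.

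The paper's proof proceeds in the opposite direction precisely to absorb this boundary issue. It first proves a forward propagation step: if at column $i$ the letter $x_i$ can be chosen freely (subject only to the vertical constraint with $u_i$), then the same holds at column $i{+}1$. It then starts at column~1, allowing \emph{every} value of $x_1$ (thereby covering whatever boundary constraint $u_k$ and (B2)(b) impose), and carries out an exhaustive case analysis on the prefix of $u$ (eleven cases, up to a prefix of length~11) to show that full freedom is reached by some column $i_u\le 12$. Your proposed local analysis on $u_{11},u_{12},u_{13}$ alone cannot substitute for this: the point of the constant~12 is that it measures how far the boundary influence can propagate, and establishing that bound is the actual content of the lemma.
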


\begin{proof}
Let $x=x_1\cdots x_k$.
We first show that if the condition of the lemma holds for some $i$th letter ($1\leq~\hspace{-1.2mm}i\leq~\hspace{-1.2mm}k$) then it also holds for any $j$th letter ($i<j\leq k$). It suffices to check the case $j=i+1$. Indeed, the minimal structures avoided by the words from $\Cyl_k^{(3)}$ are either factors of length 2, or the ``vertical factor'' $\II$ of height 2, or the ``square factor'' $\XX$ of size $2\times2$. Thus, the possible values of $x_{i+1}$ are determined by $u_{i}$, $u_{i+1}$, and $x_{i}$; each of these values together with $u_{i+1}$ and $u_{i+2}$ determine the possible values of $x_{i+2}$, and so on. There are only four possibilities for the factor $u_iu_{i+1}$. For each of them, we show that if the symbol $x_i$ can take all possible values, then the same is true for $x_{i{+}1}$, see Fig.~\ref{fig1}.

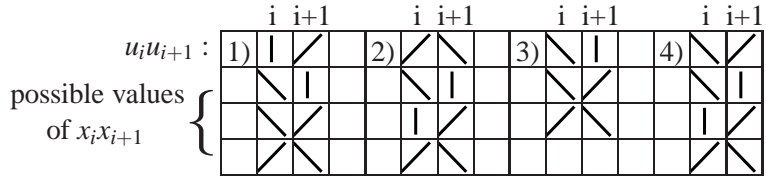
\begin{figure}[htb] 
\unitlength=0.8mm
\centerline{
\begin{picture}(90,28)(-6,0) 
\put(0,0){\N}
\put(6,0){\K}
\put(0,6){\K}
\put(6,6){\N}
\put(0,12){\K}
\put(6,12){\VP}
\put(0,18){\VP}
\put(6,18){\N}
\put(-22,20){$u_iu_{i+1}:$}
\put(-41,12){possible values}
\put(-35,6){of $x_ix_{i+1}$}
\put(-12,6){\Huge\{}
\put(-5,19){1)}
\put(2,25){i}
\put(6,25){i+1}
\put(24,0){\N}
\put(30,0){\K}
\put(24,6){\VP}
\put(30,6){\N}
\put(24,12){\K}
\put(30,12){\VP}
\put(24,18){\N}
\put(30,18){\K}
\put(19,19){2)}
\put(26,25){i}
\put(30,25){i+1}
\put(48,6){\N}
\put(54,6){\K}
\put(48,12){\K}
\put(54,12){\N}
\put(48,18){\K}
\put(54,18){\VP}
\put(43,19){3)}
\put(50,25){i}
\put(54,25){i+1}
\put(72,0){\N}
\put(78,0){\K}
\put(72,6){\VP}
\put(78,6){\N}
\put(72,12){\K}
\put(78,12){\VP}
\put(72,18){\K}
\put(78,18){\N}
\put(67,19){4)}
\put(74,25){i}
\put(78,25){i+1}
\multiput(-6,0)(6,0){16}%
{\line(0,1){24}}
\multiput(-6,0)(0,6){5}%
{\line(1,0){90}}
\end{picture}}
\vspace*{-1mm}
\caption{\small\sl Proving Lemma~\ref{12th}. If $x_i$ can take any value, $x_{i+1}$ can take any value as well.} \label{fig1}
\end{figure}

In order to prove the lemma we find, for each vertex $u$, the number $i_u$ such that the $i_u$th letter of $x$ can take any value required by the condition of the lemma. If $i_u\le 12$ for any $u$, then we are done with the proof. So we examine all possible beginnings of  
$u$ and try to build the word $x_1\cdots x_{i_u}$ such that $x_{i_u}=a$ for any allowed $a\in\Delta$. Recall that the letter $x_1$ follows $u_k$ in some cylinder word and hence, depends on $u_k$. In order to avoid the consideration of $u_k$ (the restrictions involving $u_k$ depend on $k$), we build the word $x_1\cdots x_{i_u}$ for any $x_1\in\Delta$. The word $x_1\cdots x_{i_u}$ for all $u$ that begin with $\V$ and $\Ll$ is shown in Fig.~\ref{fig2} (cases 1--3 and 4--11, respectively). The maximum value of $i_u$, namely 11, is achieved in case~9. If $u$ begins with $\Pp$, then its factor $u_2\ldots u_{i_u}$ falls into one of the cases 1--11, so, we conclude that $i_u\le12$.
\end{proof}

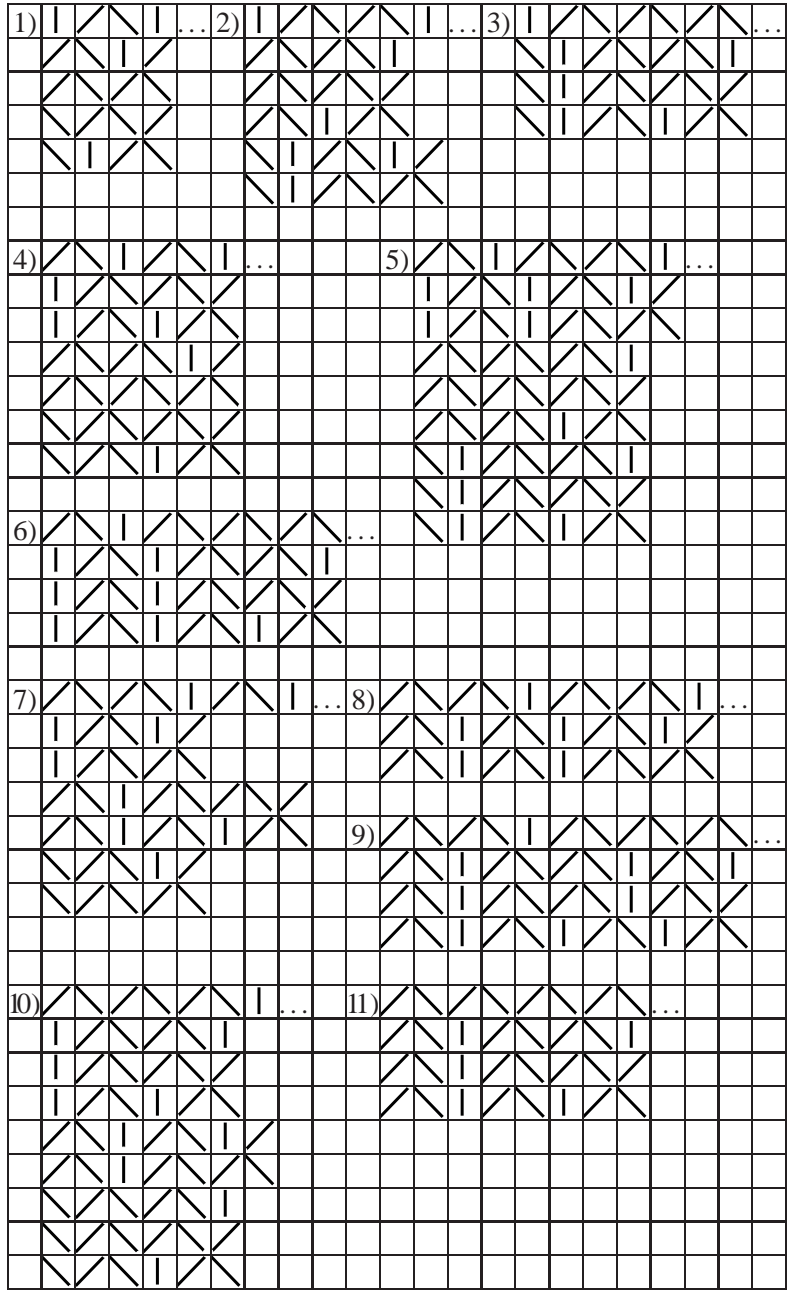
\begin{figure}[phtb] 
\unitlength=0.75mm
\centerline{
\begin{picture}(132,228)(6,-4) 
\put(-5,217){1)}
\put(0,216){\VP}
\put(6,216){\N}
\put(12,216){\K}
\put(18,216){\VP}
\put(24,216){$\cdots$}
\put(0,210){\N}
\put(6,210){\K}
\put(12,210){\VP}
\put(18,210){\N}
\put(0,204){\N}
\put(6,204){\K}
\put(12,204){\N}
\put(18,204){\K}
\put(0,198){\K}
\put(6,198){\N}
\put(12,198){\K}
\put(18,198){\N}
\put(0,192){\K}
\put(6,192){\VP}
\put(12,192){\N}
\put(18,192){\K}
\put(31,217){2)}
\put(36,216){\VP}
\put(42,216){\N}
\put(48,216){\K}
\put(54,216){\N}
\put(60,216){\K}
\put(66,216){\VP}
\put(72,216){$\cdots$}
\put(36,210){\N}
\put(42,210){\K}
\put(48,210){\N}
\put(54,210){\K}
\put(60,210){\VP}
\put(36,204){\N}
\put(42,204){\K}
\put(48,204){\N}
\put(54,204){\K}
\put(60,204){\N}
\put(36,198){\N}
\put(42,198){\K}
\put(48,198){\VP}
\put(54,198){\N}
\put(60,198){\K}
\put(36,192){\K}
\put(42,192){\VP}
\put(48,192){\N}
\put(54,192){\K}
\put(60,192){\VP}
\put(66,192){\N}
\put(36,186){\K}
\put(42,186){\VP}
\put(48,186){\N}
\put(54,186){\K}
\put(60,186){\N}
\put(66,186){\K}
\put(79,217){3)}
\put(84,216){\VP}
\put(90,216){\N}
\put(96,216){\K}
\put(102,216){\N}
\put(108,216){\K}
\put(114,216){\N}
\put(120,216){\K}
\put(126,216){$\cdots$}
\put(84,210){\K}
\put(90,210){\VP}
\put(96,210){\N}
\put(102,210){\K}
\put(108,210){\N}
\put(114,210){\K}
\put(120,210){\VP}
\put(84,204){\K}
\put(90,204){\VP}
\put(96,204){\N}
\put(102,204){\K}
\put(108,204){\N}
\put(114,204){\K}
\put(120,204){\N}
\put(84,198){\K}
\put(90,198){\VP}
\put(96,198){\N}
\put(102,198){\K}
\put(108,198){\VP}
\put(114,198){\N}
\put(120,198){\K}
\put(-5,175){4)}
\put(0,174){\N}
\put(6,174){\K}
\put(12,174){\VP}
\put(18,174){\N}
\put(24,174){\K}
\put(30,174){\VP}
\put(36,174){$\cdots$}
\put(0,168){\VP}
\put(6,168){\N}
\put(12,168){\K}
\put(18,168){\N}
\put(24,168){\K}
\put(30,168){\N}
\put(0,162){\VP}
\put(6,162){\N}
\put(12,162){\K}
\put(18,162){\VP}
\put(24,162){\N}
\put(30,162){\K}
\put(0,156){\N}
\put(6,156){\K}
\put(12,156){\N}
\put(18,156){\K}
\put(24,156){\VP}
\put(30,156){\N}
\put(0,150){\N}
\put(6,150){\K}
\put(12,150){\N}
\put(18,150){\K}
\put(24,150){\N}
\put(30,150){\K}
\put(0,144){\K}
\put(6,144){\N}
\put(12,144){\K}
\put(18,144){\N}
\put(24,144){\K}
\put(30,144){\N}
\put(0,138){\K}
\put(6,138){\N}
\put(12,138){\K}
\put(18,138){\VP}
\put(24,138){\N}
\put(30,138){\K}
\put(-5,127){6)}
\put(0,126){\N}
\put(6,126){\K}
\put(12,126){\VP}
\put(18,126){\N}
\put(24,126){\K}
\put(30,126){\N}
\put(36,126){\K}
\put(42,126){\N}
\put(48,126){\K}
\put(54,126){$\cdots$}
\put(0,120){\VP}
\put(6,120){\N}
\put(12,120){\K}
\put(18,120){\VP}
\put(24,120){\N}
\put(30,120){\K}
\put(36,120){\N}
\put(42,120){\K}
\put(48,120){\VP}
\put(0,114){\VP}
\put(6,114){\N}
\put(12,114){\K}
\put(18,114){\VP}
\put(24,114){\N}
\put(30,114){\K}
\put(36,114){\N}
\put(42,114){\K}
\put(48,114){\N}
\put(0,108){\VP}
\put(6,108){\N}
\put(12,108){\K}
\put(18,108){\VP}
\put(24,108){\N}
\put(30,108){\K}
\put(36,108){\VP}
\put(42,108){\N}
\put(48,108){\K}
\put(61,175){5)}
\put(66,174){\N}
\put(72,174){\K}
\put(78,174){\VP}
\put(84,174){\N}
\put(90,174){\K}
\put(96,174){\N}
\put(102,174){\K}
\put(108,174){\VP}
\put(114,174){$\cdots$}
\put(66,168){\VP}
\put(72,168){\N}
\put(78,168){\K}
\put(84,168){\VP}
\put(90,168){\N}
\put(96,168){\K}
\put(102,168){\VP}
\put(108,168){\N}
\put(66,162){\VP}
\put(72,162){\N}
\put(78,162){\K}
\put(84,162){\VP}
\put(90,162){\N}
\put(96,162){\K}
\put(102,162){\N}
\put(108,162){\K}
\put(66,156){\N}
\put(72,156){\K}
\put(78,156){\N}
\put(84,156){\K}
\put(90,156){\N}
\put(96,156){\K}
\put(102,156){\VP}
\put(66,150){\N}
\put(72,150){\K}
\put(78,150){\N}
\put(84,150){\K}
\put(90,150){\N}
\put(96,150){\K}
\put(102,150){\N}
\put(66,144){\N}
\put(72,144){\K}
\put(78,144){\N}
\put(84,144){\K}
\put(90,144){\VP}
\put(96,144){\N}
\put(102,144){\K}
\put(66,138){\K}
\put(72,138){\VP}
\put(78,138){\N}
\put(84,138){\K}
\put(90,138){\N}
\put(96,138){\K}
\put(102,138){\VP}
\put(66,132){\K}
\put(72,132){\VP}
\put(78,132){\N}
\put(84,132){\K}
\put(90,132){\N}
\put(96,132){\K}
\put(102,132){\N}
\put(66,126){\K}
\put(72,126){\VP}
\put(78,126){\N}
\put(84,126){\K}
\put(90,126){\VP}
\put(96,126){\N}
\put(102,126){\K}
\put(-5,97){7)}
\put(0,96){\N}
\put(6,96){\K}
\put(12,96){\N}
\put(18,96){\K}
\put(24,96){\VP}
\put(30,96){\N}
\put(36,96){\K}
\put(42,96){\VP}
\put(48,96){$\cdots$}
\put(0,90){\VP}
\put(6,90){\N}
\put(12,90){\K}
\put(18,90){\VP}
\put(24,90){\N}
\put(0,84){\VP}
\put(6,84){\N}
\put(12,84){\K}
\put(18,84){\N}
\put(24,84){\K}
\put(0,78){\N}
\put(6,78){\K}
\put(12,78){\VP}
\put(18,78){\N}
\put(24,78){\K}
\put(30,78){\N}
\put(36,78){\K}
\put(42,78){\N}
\put(0,72){\N}
\put(6,72){\K}
\put(12,72){\VP}
\put(18,72){\N}
\put(24,72){\K}
\put(30,72){\VP}
\put(36,72){\N}
\put(42,72){\K}
\put(0,66){\K}
\put(6,66){\N}
\put(12,66){\K}
\put(18,66){\VP}
\put(24,66){\N}
\put(0,60){\K}
\put(6,60){\N}
\put(12,60){\K}
\put(18,60){\N}
\put(24,60){\K}
\put(55,97){8)}
\put(60,96){\N}
\put(66,96){\K}
\put(72,96){\N}
\put(78,96){\K}
\put(84,96){\VP}
\put(90,96){\N}
\put(96,96){\K}
\put(102,96){\N}
\put(108,96){\K}
\put(114,96){\VP}
\put(120,96){$\cdots$}
\put(60,90){\N}
\put(66,90){\K}
\put(72,90){\VP}
\put(78,90){\N}
\put(84,90){\K}
\put(90,90){\VP}
\put(96,90){\N}
\put(102,90){\K}
\put(108,90){\VP}
\put(114,90){\N}
\put(60,84){\N}
\put(66,84){\K}
\put(72,84){\VP}
\put(78,84){\N}
\put(84,84){\K}
\put(90,84){\VP}
\put(96,84){\N}
\put(102,84){\K}
\put(108,84){\N}
\put(114,84){\K}
\put(55,73){9)}
\put(60,72){\N}
\put(66,72){\K}
\put(72,72){\N}
\put(78,72){\K}
\put(84,72){\VP}
\put(90,72){\N}
\put(96,72){\K}
\put(102,72){\N}
\put(108,72){\K}
\put(114,72){\N}
\put(120,72){\K}
\put(126,72){$\cdots$}
\put(60,66){\N}
\put(66,66){\K}
\put(72,66){\VP}
\put(78,66){\N}
\put(84,66){\K}
\put(90,66){\N}
\put(96,66){\K}
\put(102,66){\VP}
\put(108,66){\N}
\put(114,66){\K}
\put(120,66){\VP}
\put(60,60){\N}
\put(66,60){\K}
\put(72,60){\VP}
\put(78,60){\N}
\put(84,60){\K}
\put(90,60){\N}
\put(96,60){\K}
\put(102,60){\VP}
\put(108,60){\N}
\put(114,60){\K}
\put(120,60){\N}
\put(60,54){\N}
\put(66,54){\K}
\put(72,54){\VP}
\put(78,54){\N}
\put(84,54){\K}
\put(90,54){\VP}
\put(96,54){\N}
\put(102,54){\K}
\put(108,54){\VP}
\put(114,54){\N}
\put(120,54){\K}
\put(-6,43){1\!0)}
\put(0,42){\N}
\put(6,42){\K}
\put(12,42){\N}
\put(18,42){\K}
\put(24,42){\N}
\put(30,42){\K}
\put(36,42){\VP}
\put(42,42){$\cdots$}
\put(0,36){\VP}
\put(6,36){\N}
\put(12,36){\K}
\put(18,36){\N}
\put(24,36){\K}
\put(30,36){\VP}
\put(0,30){\VP}
\put(6,30){\N}
\put(12,30){\K}
\put(18,30){\N}
\put(24,30){\K}
\put(30,30){\N}
\put(0,24){\VP}
\put(6,24){\N}
\put(12,24){\K}
\put(18,24){\VP}
\put(24,24){\N}
\put(30,24){\K}
\put(0,18){\N}
\put(6,18){\K}
\put(12,18){\VP}
\put(18,18){\N}
\put(24,18){\K}
\put(30,18){\VP}
\put(36,18){\N}
\put(0,12){\N}
\put(6,12){\K}
\put(12,12){\VP}
\put(18,12){\N}
\put(24,12){\K}
\put(30,12){\N}
\put(36,12){\K}
\put(0,6){\K}
\put(6,6){\N}
\put(12,6){\K}
\put(18,6){\N}
\put(24,6){\K}
\put(30,6){\VP}
\put(0,0){\K}
\put(6,0){\N}
\put(12,0){\K}
\put(18,0){\N}
\put(24,0){\K}
\put(30,0){\N}
\put(0,-6){\K}
\put(6,-6){\N}
\put(12,-6){\K}
\put(18,-6){\VP}
\put(24,-6){\N}
\put(30,-6){\K}
\put(54,43){1\!1)}
\put(60,42){\N}
\put(66,42){\K}
\put(72,42){\N}
\put(78,42){\K}
\put(84,42){\N}
\put(90,42){\K}
\put(96,42){\N}
\put(102,42){\K}
\put(108,42){$\cdots$}
\put(60,36){\N}
\put(66,36){\K}
\put(72,36){\VP}
\put(78,36){\N}
\put(84,36){\K}
\put(90,36){\N}
\put(96,36){\K}
\put(102,36){\VP}
\put(60,30){\N}
\put(66,30){\K}
\put(72,30){\VP}
\put(78,30){\N}
\put(84,30){\K}
\put(90,30){\N}
\put(96,30){\K}
\put(102,30){\N}
\put(60,24){\N}
\put(66,24){\K}
\put(72,24){\VP}
\put(78,24){\N}
\put(84,24){\K}
\put(90,24){\VP}
\put(96,24){\N}
\put(102,24){\K}
\thinlines
\multiput(-6,-6)(6,0){24}%
{\line(0,1){228}}
\multiput(-6,-6)(0,6){39}%
{\line(1,0){138}}
\end{picture}}
\caption{\small\sl Proving Lemma~\ref{12th}. Cases 1--11 represent different beginnings of the word $u$. Under each beginning, some possible beginnings of the word $x$ are drawn. For each possible first letter of $x$, we exhibit such beginnings ending by all possible letters. In some cases, not all possible beginnings of $x$ are drawn; for such missing beginnings, case 3 refers to case 2, case 6 to case 5, cases 8 and 9 to case 7, and case 11 to case 10.} \label{fig2}
\end{figure}

Lemma~\ref{12th} is used to prove another property of similarity.

\begin{Lemma}\label{sim2} 
If $u\sim v$ and $u\to x$ is an edge in $\A$, then there exists an edge $v\to y$ in $\B$ such that $x\sim y$.
\end{Lemma}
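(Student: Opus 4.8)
The plan is to keep $y$ equal to $x$ on exactly the positions where they must agree — namely $12,\dots,k$ (recall $k\ge12$) — and to choose $y_1\cdots y_{11}$ so that the only extra thing to realise is that the $12$th letter of $y$ equal $x_{12}$; this last task is precisely what Lemma~\ref{12th} does. I would first check that Lemma~\ref{12th} applies to the vertex $v$ with target letter $a=x_{12}$: since $u\to x$ is an edge in $\A$, the $12$th column $\genfrac{}{}{0pt}{}{u_{12}}{x_{12}}$ of $\genfrac{}{}{0pt}{}{u}{x}$ is not the avoided factor $\genfrac{}{}{0pt}{}{\V}{\V}$, so $u_{12}\ne\V$ or $x_{12}\ne\V$; and $u\sim v$ forces $v_{12}=u_{12}$, whence $x_{12}\ne\V$ or $v_{12}\ne\V$, which is exactly the hypothesis of Lemma~\ref{12th}. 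Thus there is an edge $v\to y'$ in $\B$ with $y'_{12}=x_{12}$. Now set $y=y'_1\cdots y'_{12}\,x_{13}\cdots x_k$. Then $y_i=x_i$ for every $i\ge12$, so $x\sim y$, and it remains only to show that $v\to y$ is an edge of $\B$.

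For that I would use that every pattern forbidden in $\Cyl_k^{(3)}$ has width at most two — a horizontal factor of length two, the height-two column $\II$, or the $2\times2$ block $\XX$ — so that being a vertex and being an $\A$-edge are local conditions that can be violated only inside two consecutive columns, and every such pair of columns lies entirely in $\{1,\dots,12\}$ or entirely in $\{12,\dots,k\}$. On the first range $y$ coincides with $y'$, and $v\to y'$ is an edge of $\B$, hence of $\A$, so no violation occurs; on the second range $y$ coincides with $x_{12}\cdots x_k$ and, since $v_i=u_i$ for $i\ge12$, any violation there would be one in $\genfrac{}{}{0pt}{}{u}{x}$, which is impossible because $u\to x$ is an $\A$-edge. (This is the familiar fact that admissible words may be glued along an overlap of a single column when the forbidden patterns have bounded width; in particular $y$ itself is a genuine vertex.) Hence $v\to y$ is an edge of $\A$, and it remains to verify (B2a) and (B2b). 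Condition (B2b) is immediate: if $v$ has the form $\X\cdots\X$, then $y'_1=\V$ because $v\to y'$ lies in $\B$, and $y_1=y'_1$. For (B2a) I would take a witness of the $\B$-edge $v\to y'$, i.e.\ an admissible word of length $2k{+}1$ of the form $a\,v_1\cdots v_k\,y'_1\cdots y'_k$ realising the transition $av\to by'$ in $\R^k$ with $b=v_k$, and replace its last $k$ letters by $y_1\cdots y_k$: the result stays admissible, because the only length-two factors that change are the one at positions $12,13$ of its suffix, which becomes $x_{12}x_{13}$ — a factor of the admissible word $x$ — and those inside $x_{13}\cdots x_k$. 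So the transition $av\to by$ survives in $\R^k$, and $v\to y$ is an edge of $\B$ with $x\sim y$, as required.

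The genuine input here is Lemma~\ref{12th}; granted that, only two points need care. The first is the hypothesis check above, where one must notice that avoidance of $\genfrac{}{}{0pt}{}{\V}{\V}$ supplies $v_{12}\ne\V$ in exactly the case $x_{12}=\V$ that would otherwise block Lemma~\ref{12th}. The second, which I expect to be the main (but still routine) obstacle, is the step to (B2a): since $\B$ is defined through the non-local power graph $\R^k$ rather than by a purely local rule, one must confirm that re-splicing $y$ beyond position $12$ destroys neither the $\R^k$-transition nor the condition that $y$ begin with $\V$. Everything else is mere inheritance of bounded-width forbidden patterns from the $\A$-edge $u\to x$ and the $\B$-edge $v\to y'$.
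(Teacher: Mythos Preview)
Your proof is correct and follows essentially the same strategy as the paper: invoke Lemma~\ref{12th} to obtain a $\B$-edge $v\to y'$ with $y'_{12}=x_{12}$, then replace the tail $y'_{13}\cdots y'_k$ by $x_{13}\cdots x_k$ and use the width-$\le 2$ locality of the forbidden patterns to see that the result is still a $\B$-edge. The paper phrases the last step more abstractly (``the set of all possible values of the factor $y_{13}\cdots y_k$ coincides with such a set for the factor $x_{13}\cdots x_k$'') rather than splicing explicitly, but the content is the same; your version is in fact more careful, since you verify the hypothesis of Lemma~\ref{12th} and the conditions (B2a), (B2b) explicitly, points the paper leaves implicit.
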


\begin{proof} 
Let $u=u_1\cdots u_k$, $x=x_1\cdots x_k$, $v=v_1\cdots v_k$, and we have to find the vertex $y=y_1\cdots y_k$. Assume that we know only the letters $u_{12},\ldots,u_k$, and $x_{12}$. Then we still can restore all possible values of the factor $x_{13}\cdots x_k$ independently of the letters $u_1,\ldots,u_{11},x_1,\ldots,x_{11}$ (cf. the proof of Lemma~\ref{12th}). 

Now consider all $y$'s such that $v\to y$ is an edge in $\B$ and $y_{12}=x_{12}$. The set of all such $y$'s is nonempty by Lemma~\ref{12th}. Since $v_{12}\cdots v_k=u_{12}\cdots u_k$ by similarity of $u$ and $v$, the set of all possible values of the factor $y_{13}\cdots y_k$ coincides with such a set for the factor $x_{13}\cdots x_k$. Thus, we can pick up $y$ so that the factor $y_{12}\cdots y_k$ equals $x_{12}\cdots x_k$ for the actual value of $x$. Then $x\sim y$, and the lemma is proved.
\end{proof}

\begin{teo} 
The limit $\lim_{k\to\infty}\alpha(T_k^{(3)})$ exists and is equal to $\alpha(D)$.
\end{teo}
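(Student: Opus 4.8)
The plan is to transfer the whole question to the cylindric languages and then pinch $\alpha(\Cyl_k^{(3)})$ between two quantities converging to $\alpha(D)$. Two facts are already established: $\alpha(T_k^{(3)})=\alpha(\Cyl_k^{(3)})$, and $\alpha(D_k)\to\alpha(D)$ as $k\to\infty$. Also $P'(n)\le C'(n{+}1,k)\le C(n{+}1,k)$, and the limits $\lim_n C(n,k)^{1/n}=\alpha(D_k)^k$, $\lim_n C'(n,k)^{1/n}=\alpha(\Cyl_k^{(3)})^k$ exist by submultiplicativity and Fekete's lemma; hence the inequality $C'(n{+}1,k)\le C(n{+}1,k)$ at once gives the cheap half $\alpha(\Cyl_k^{(3)})\le\alpha(D_k)$. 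So the whole content of the theorem is the reverse bound, and I would prove
$$
\alpha(D_k)\big/N^{1/k}\ \le\ \alpha(\Cyl_k^{(3)})\ \le\ \alpha(D_k)\qquad(k\ge12),
$$
with $N=28$ the constant of Remark~\ref{NN}. As $k\to\infty$ one has $N^{1/k}\to1$ and $\alpha(D_k)\to\alpha(D)$, so the squeeze yields $\lim_k\alpha(\Cyl_k^{(3)})=\alpha(D)$, and $\alpha(T_k^{(3)})=\alpha(\Cyl_k^{(3)})$ closes the proof. The restriction $k\ge12$ costs nothing since only the limit matters.

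For the nontrivial inequality I would compare the digraphs $\A$ and $\B$, which share a vertex set, satisfy $E(\B)\subseteq E(\A)$, and for which $\lim_n P(n)^{1/n}=\alpha(D_k)^k$ (since $P(n)=C(n{+}1,k)$) while $\limsup_n P'(n)^{1/n}\le\alpha(\Cyl_k^{(3)})^k$ (since $P'(n)\le C'(n{+}1,k)$). It thus suffices to show that $\B$, although strictly sparser than $\A$, has essentially the same index, namely
$$
P(n)\ \le\ N^{\,n+1}\,P'(n)\qquad\text{for all }n.
$$
Indeed, then $\limsup_n P'(n)^{1/n}\ge\alpha(D_k)^k/N$, and sandwiching $\limsup_n P'(n)^{1/n}$ between $\alpha(D_k)^k/N$ and $\alpha(\Cyl_k^{(3)})^k$ and taking $k$-th roots gives precisely the displayed two-sided bound.

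The heart of the matter is to derive $P(n)\le N^{\,n+1}P'(n)$ from Lemma~\ref{sim2}, used as a walk-transfer principle. Given a walk $W=(u{=}x_0,x_1,\dots,x_n)$ in $\A$, I build a walk $W'=(u{=}y_0,y_1,\dots,y_n)$ in $\B$ with $x_i\sim y_i$ for all $i$: put $y_0=x_0$; having chosen $y_{i-1}\sim x_{i-1}$, apply Lemma~\ref{sim2} to the edge $x_{i-1}\to x_i$ of $\A$ to get an edge $y_{i-1}\to y_i$ of $\B$ with $y_i\sim x_i$. Fixing one output $W'$ for each $W$ defines a map $\Phi$ from length-$n$ walks of $\A$ to length-$n$ walks of $\B$ that preserves the first vertex. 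Every walk in a fibre $\Phi^{-1}(y_0,\dots,y_n)$ has its $i$-th vertex inside the fixed $\sim$-class of $y_i$, and by Remark~\ref{NN} that class has at most $N$ members; hence each fibre contains at most $N^{\,n+1}$ vertex-sequences. Consequently $P(n)$ is at most $N^{\,n+1}$ times the number of walks in the image of $\Phi$, which is at most $N^{\,n+1}P'(n)$ --- the required estimate.

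The real obstacle is entirely located in Lemma~\ref{sim2} (and hence in Lemma~\ref{12th}, on which it leans): $\B$ is genuinely poorer than $\A$ --- most visibly, every successor of a vertex of the form $\X\cdots\X$ is forced to start with $\V$ --- and the content of that lemma is that, in spite of this, $\B$ remains ``as rich as $\A$ modulo the finite equivalence $\sim$''. Once this is granted the rest is soft: the comparison loses only the constant factor $N$, annihilated by the $k$-th root, and nothing in the opposite direction, so both ends of the displayed inequality are pinched onto $\alpha(D)$. The only routine items to check are the existence of the limits defining $\alpha(D_k)^k$ and $\alpha(\Cyl_k^{(3)})^k$ (Fekete) and the elementary bound $C'(n{+}1,k)\le C(n{+}1,k)$, obtained by reading a length-$(n{+}1)k$ cylindric factor as an honest $(n{+}1)\times k$ array over $\Delta$ that avoids the same $2\times1$ and $2\times2$ patterns.
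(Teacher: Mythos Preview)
Your proposal is correct and follows essentially the same route as the paper: both transfer the question to the ratio $P(n)/P'(n)$, use Lemma~\ref{sim2} to map walks in $\A$ to walks in $\B$ while preserving $\sim$-classes, bound the fibres by a power of $N$ via Remark~\ref{NN}, and then squeeze $\alpha(\Cyl_k^{(3)})$ between $\alpha(D_k)/N^{1/k}$ and $\alpha(D_k)$. The only cosmetic differences are that the paper phrases the walk-transfer as a map between rooted trees and obtains the slightly sharper bound $P(n)\le N^{n}P'(n)$ by exploiting that the children of a tree vertex carry distinct labels, whereas your cruder count of $N^{\,n+1}$ vertex-sequences in a fibre is perfectly adequate after taking $n$th roots.
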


\begin{proof} 
Recall that $\alpha(T_k^{(3)})=\alpha(\Cyl_k^{(3)})$. Since the sequence $\big(C_{\Cyl_k^{(3)}}(n)\big)\!^{1/n}$ converges to $\alpha(\Cyl_k^{(3)})$, so does any its subsequence. Hence, $\alpha\big(\Cyl_k^{(3)}\big)=\lim_{n\to\infty}{\big(C'(n,k)\big)\!^{1/nk}}$. On the other hand, we know that $\alpha(D)=\lim_{k\to\infty}\alpha(D_k)=\lim_{k\to\infty}\lim_{n\to\infty}{\big(C(n,k)\big)\!^{1/nk}}$. Thus, let us estimate the ratio $\big({C'(n,k)}/C(n,k))\big)\!^{1/nk}$. The upper bound $\big({C'(n,k)}/C(n,k))\big)\!^{1/nk}\!\leq 1$ is trivial. In order to get the lower bound, we recall that $C(n{+}1,k)=P(n)$ and $C'(n{+}1,k)\geq P'(n)$. 

Let us fix an arbitrary vertex $u$ and consider the $\A$-\textit{tree} (for $u$) defined as follows. The vertices of this tree are labeled by the vertices of $\A$, $u$ being the label of the root. Any vertex labeled by $v$ has $\deg_{\A}^+(v)$ children; the children are labeled by all forward neighbours of $v$ in $\A$. Thus, there is a natural bijection between the set of vertices of level $n$ in the $\A$-tree and the set of all walks from $u$ of length $n$ in the automaton $\A$. That is, $n$th level of the $\A$-tree contains exactly $P_u(n)$ vertices. The $\B$-\textit{tree} is defined in the same way, using $\B$ instead of $\A$. The $n$th level of the $\B$-tree contains $P'_u(n)$ vertices.

Using Lemma~\ref{sim2} inductively, we get that the label of any vertex of $n$th level in the $\A$-tree is similar to the label of some vertex of $n$th level in the $\B$-tree. Let us start from the roots of the trees and inductively construct a total map $\mu$ from the $\A$-tree to the $\B$-tree satisfying the following conditions:
\begin{itemize}
\item[(1)] if $s$ is a level $n$ vertex labeled by $x$, then $\mu(s)$ is a level $n$ vertex labeled by some $y\sim x$;
\item[(2)] $\mu(\parent(s))=\parent(\mu(s))$.
\end{itemize}
The existence of such a map is ensured by Lemma~\ref{sim2} and the structure of trees. 

Now we take a level $n$ vertex $t$ from the $\B$-tree and estimate the size of the set $\mu^{-1}(t)$. Assume that $|\mu^{-1}(\parent(t))|=K$. If $s$ is mapped to $t$, then $\parent(s)\in\mu^{-1}(\parent(t))$. All children of the vertex $\parent(s)$ are different. Hence, by Remark~\ref{NN}, at most $N$ of these children can be mapped to $t$. Thus, $|\mu^{-1}(t)|\le KN$. The case $n=0$ gives us $|\mu^{-1}(t)|=1$ whence we obtain $|\mu^{-1}(t)|\le N^n$. Since $\mu$ is total, we have $P_u(n)\le N^nP'_u(n)$. Summing up these inequalities for all vertices $u$, we finally get $P(n)\le N^nP'(n)$.

Returning to combinatorial complexities, we can write
\begin{gather*}
\frac{1}{N^n}\leq \frac{P'(n)}{P(n)}\leq \frac{C'(n{+}1,k)}{C(n{+}1,k)}\leq 1\,,\\
\left({\frac{1}{N^n}}\right)^{1/(n{+}1)k}\leq \left({\frac{C'(n{+}1,k)}{C(n{+}1,k)}}\right)^{1/(n{+}1)k}\leq 1.
\end{gather*}
Taking the limits of all sides as $n\rightarrow \infty$, we get
$$
\left({\frac{1}{N}}\right)^{1/k}\leq {\frac{\alpha(T_k^{(3)})}{\alpha(D_k)}}\leq 1.
$$
Now we let $k\rightarrow \infty$ and use the squeese theorem to conclude that the limit $\lim_{k\to\infty}{\alpha(T_k^{(3)})/\alpha(D_k)}$ exists and is equal to 1 (recall that $N$ is independent of $k$). Since the limit $\lim_{k\to\infty}\alpha(D_k)=\alpha (D)$ also exists, we have
$$
\alpha(D)=\alpha(D)\cdot1=\lim_{k\to\infty}\alpha(D_k){\cdot}\lim_{k\to\infty}{\frac{\alpha\big(T_k^{(3)}\big)}{\alpha(D_k)}}= \lim_{k\to\infty}{\frac{\alpha\big(T_k^{(3)}\big)}{\alpha(D_k)}}\cdot\alpha(D_k)=\lim_{k\to\infty}{\alpha\big(T_k^{(3)}\big)},
$$
as desired.
\end{proof}

\begin{Rmk}
From the proof of the above theorem it is clear that the actual value of the constant $N$ such that $P(n)\approx N^nP'(n)$ is much smaller than 28. Computations show that $N\approx 2.119$. Hence, the set $D_k$ of 2-dimensional words of width $k$ is not much bigger than the corresponding set $\Cyl_k^{(3)}$ of cylindric words.
\end{Rmk}

\bibliographystyle{eptcs}
\begin {thebibliography} {0}
\providecommand{\doi}[1]{\href{http://dx.doi.org/#1}{#1}}

\bibitem {Cap}
S. Capobianco (2008): \textit{Multidimensional cellular automata and generalization of Fekete's lemma}. {\slshape Discrete Mathematics and Theoretical Computer Science} 10(3), pp. 95--104.

\bibitem{Car}
A. Carpi (2007): {\it On Dejean's conjecture over large alphabets}. {\slshape Theoretical Computer Science} 385, pp. 137--151, doi: \doi{10.1016/j.tcs.2007.06.001}.

\bibitem {CR1}
J.\,D. Currie \& N. Rampersad (2009): {\it Dejean's conjecture holds for $n\ge27$}. {\slshape RAIRO Theoretical Informatics and Applications } 43, pp. 775--778, doi: \doi{10.1051/ita/2009017}.

\bibitem {CR2}
J.\,D. Currie \& N. Rampersad (2011): {\it A proof of Dejean's conjecture}. {\slshape Mathematics of Computation}  80, pp. 1063--1070, doi: \doi{10.1090/S0025-5718-2010-02407-X}.

\bibitem {Dej}
F. Dejean (1972): {\it Sur un Theoreme de Thue}. {\slshape Journal of Combinatorial Theory. Series A} 13(1), pp. 90--99.

\bibitem {Fih}
G. M. Fichtenholz (2001): \textit{Differential and integral calculus}, volume 1, Fizmatlit, Moscow.

\bibitem {2D}
D. Giammarresi \& A. Restivo,  G. Rozenberg \& A. Salomaa, editors (1997): \textit{Two-dimensional languages}, Handbook of Formal Languages, volume 3, pp. 215--268, Springer, Berlin. 

\bibitem {MNC}
M. Mohammad-Noori \& J.\,D. Currie (2007): {\it Dejean's conjecture and Sturmian words}. {\slshape European Journal of Combinatorics} 28, pp. 876--890, doi: \doi{10.1016/j.ejc.2005.11.005}.

\bibitem {Mou}
J. Moulin-Ollagnier (1992): {\it Proof of Dejean's Conjecture for Alphabets with 5, 6, 7, 8, 9, 10 and 11 Letters}. {\slshape Theoretical Computer Science} 95(2), pp. 187--205, doi: \doi{10.1016/0304-3975(92)90264-G}.

\bibitem {Pan}
J.-J. Pansiot (1984): {\it A propos d'une conjecture de F. Dejean sur les r\'ep\'etitions dans les mots}. {Discrete Applied Mathematics} 7, pp. 297--311, doi: \doi{10.1016/0166-218X(84)90006-4}.

\bibitem {Rao}
M. Rao (2011): {\it Last Cases of Dejean's Conjecture}. {\slshape Theoretical Computer Science} 412(27), pp. 3010-3018, doi: \doi{10.1016/j.tcs.2010.06.020}.

\bibitem {ShGo}
A. M. Shur \& I. A. Gorbunova (2010): {\it On the growth rates of complexity of threshold languages}. {\slshape RAIRO Theoretical Informatics and Applications} 44, pp. 175--192, doi: \doi{10.1051/ita/2010012}.

\bibitem {Sh1}
A.M. Shur (2010): {\it Growth rates of complexity of power-free languages}. {\slshape Theoretical Computer Science} 411, pp. 3209--3223, doi: \doi{10.1016/j.tcs.2010.05.017}.

\bibitem {Sh2}
A.M. Shur (2008): {\it Comparing complexity functions of a language and its extendable part}. {\slshape RAIRO Theore\-tical Informatics and Applications } 42, pp. 647--655, doi: \doi{10.1051/ita:2008021}.
\end {thebibliography}

\end{document}